\newtheorem{lem}{Lemma}[section]
\newtheorem{thm}{Theorem}[section]
\theoremstyle{definition}
\theoremstyle{remark}
\theoremstyle{remark}
\newtheorem{remark}{Remark}[section]
\numberwithin{equation}{section}
\newcommand{\C}{{\mathbb C}}
\newcommand{\N}{{\mathbb N}}
\newcommand{\R}{{\mathbb R}}
\definecolor{blu}{rgb}{0,0,1}
\begin{document}
\title{Symmetry breaking for  Schr\"odinger-Poisson-Slater energy }
\keywords{Symmetry breaking, Schr\"odinger-Poisson-Slater energy, constrained minimization}
\author{Jacopo Bellazzini}
   \address{Universit\`a di Sassari \\Via Piandanna 4, 07100 Sassari, Italy}
 \email{jbellazzini@uniss.it}
   \author{Marco Ghimenti}
   \address{Dipartimento di Matematica  \\ Universit\`a di Pisa \\ Largo B. Pontecorvo 5, 56100 Pisa, Italy}
\email{ghimenti@mail.dm.unipi.it}

\maketitle
\begin{abstract}
We study the asymptotic behavior of  ground state energy for Schr\"odinger-Poisson-Slater energy functional. 
We show that ground state energy restricted to  radially
symmetric functions is above the ground state energy  when the number of particles is sufficiently large. 
\end{abstract}

The aim of this paper
is to show  symmetry breaking phenomena for the ground states energy of the following 
 Schr\"odinger-Poisson-Slater  (SPS) equation,
\begin{equation}\label{eq:main}
i\psi_{t}+ \Delta \psi - (|x|^{-1}*|\psi|^{2}) \psi+|\psi|^{\frac{2}{3}}\psi=0 \ \ \  \text{ in } \R^{3},
\end{equation}
where $\psi(x,t):\R^{3}\times[0,T)\rightarrow \C$ is the wave function, 
$*$ denotes the convolution,  and the nonlinear term $|\psi|^{\frac{2}{3}}\psi$ is the Slater correction term. \\
This  Schr\"odinger-type equation with a repulsive nonlocal Coulomb potential is obtained by approximation of the HartreeÐ-Fock equation describing a 
quantum mechanical system of many particles, see e.g.  \cite{BA,MA, BLSS}. As example, SPS equation gives a good appro"ximation of the time  evolution of an electron ensemble in a semiconductor crystal. In particular, in this mean field 
approximation the quantity $\int |\psi|^2dx=\rho^2$ describes the total amount of particles. 
The energy associated to a state $\psi$ is defined as 
$$E(\psi):= \frac 12   \|\nabla \psi \|_{L^2}^2 +\frac{1}{4}\int_{\mathbb{R}^3}\int_{\mathbb{R}^3}\frac{\left | \psi(x) \right |^2\left | \psi(y) \right |^2}{\left | x-y \right |}dxdy-\frac{3}{8}\int_{\R^3} |\psi|^{\frac{8}{3}}dx.$$
We define the \emph{ground state energy} the quantity $I_{\rho^2}$, defined as
\begin{equation*} \label{mini1}
I_{\rho^2}=\inf_{B_{\rho}} E(\psi)
\end{equation*}
where $ B_{\rho}:=\{ \psi  \in H^1(\R^3) \text{ such that } \|\psi \|_{2}^2=\rho^2\}$. 

Very briefly we summarize what is known about $I_{\rho^2}$ (see e.g \cite{BS}, \cite{BS1}, \cite{BJL}, \cite{GEPRVI}, \cite{SS}, \cite{CDSS})
\begin{itemize}
\item[(i)] $ -\infty<I_{\rho^2}<0$  for all  $\rho>0$
\item[(ii)] the ground state energy is weakly subadditive, namely that 
$$I_{\rho^2}\leq I_{\mu^2}+I_{\rho^2-\mu^2} \text{ for all } 0<\mu<\rho$$
\item[(iii)] the function $\rho \rightarrow I_{\rho^2}$ is continuous
\item[(iv)] $\lim_{\rho \rightarrow 0} \frac{I_{\rho^2}}{\rho^2}=0$
\item[(v)] minimizers for $E$ on $B_{\rho}$ exist when $\rho$ is sufficiently small
\item[(vi)] minimizers for $E$ on $B_{\rho}$ are radially symmetric when $\rho$ is sufficiently small
\end{itemize}

If we define 
\begin{equation*} \label{minirad1}
I_{\rho^2}^{rad}=\inf_{B_{\rho}^{rad}} E(\psi)
\end{equation*}
where $ B_{\rho}^{rad}:=\{ \psi  \in H^1(\R^3) \text{ radially symmetric such that } \|\psi \|_{2}^2=\rho^2\}$
it is not clear  if 
$$I_{\rho^2}^{rad}=I_{\rho^2} \text{ for all } \rho>0.  \ \ \ \ \ \ \ \ \ \ \ \ \ (Q)$$
We shall emphasize that for   Schr\"odinger-Poisson-Slater energy there are two terms that are in competition 
when we move from $u$ to $u^{\star}$, the symmetric decreasing rearrangement of $u$. Indeed, it is well understood that
for kinetic term $$\int_{\R^3}|\nabla u^{\star}|^2 dx\leq \int_{\R^3}|\nabla u|^2 dx$$
while for the Coulomb  term thanks to Riesz inequality 
 $$\int_{\mathbb{R}^3}\int_{\mathbb{R}^3}\frac{\left | u(x) \right |^2\left | u(y) \right |^2}{\left | x-y \right |}dxdy\leq \int_{\mathbb{R}^3}\int_{\mathbb{R}^3}\frac{\left | u^{\star}(x) \right |^2\left | u^{\star}(y) \right |^2}{\left | x-y \right |}dxdy$$
such that the  question $(Q)$ is still open.\\
Aim of this paper is  to give an  answer to question (Q). The fact that the energy functional as well as the $L^2$ constraint is rotationally invariant 
does not imply that the ground state energy for  radially symmetric functions coincides with ground state energy. Symmetry breaking can occur. This is exactly the case when
$\rho$ is sufficiently large.
\begin{thm}\label{mainthm}
There exists $\rho_0>0$ such that $I_{\rho^2}<I_{\rho^2}^{rad}$ for all $\rho>\rho_0.$
\end{thm}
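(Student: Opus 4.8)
The plan is to pin down the leading order in $\rho$ of both energies and show that the radial constraint strictly raises it. Concretely, I would prove two matching-in-spirit estimates: an upper bound $I_{\rho^2}\le -a\rho^2$ with some constant $a>0$ for all large $\rho$, and a radial lower bound $I_{\rho^2}^{rad}\ge -C\rho^\gamma$ with some exponent $\gamma<2$. Since $C\rho^\gamma=o(\rho^2)$, these two facts force $I_{\rho^2}<I_{\rho^2}^{rad}$ as soon as $\rho$ exceeds the threshold $\rho_0$ at which $C\rho^\gamma<a\rho^2$.

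For the upper bound I would only iterate the weak subadditivity (ii) (whose content is exactly that far-separated copies have vanishing mutual Coulomb interaction). Fix $\mu_0$ small enough that a minimizer exists by (v); then $I_{\mu_0^2}<0$ by (i), so set $a_0:=-I_{\mu_0^2}/\mu_0^2>0$. Induction on (ii) gives $I_{k\mu_0^2}\le kI_{\mu_0^2}$, and for $\rho^2\in[k\mu_0^2,(k+1)\mu_0^2]$ one more application of (ii) together with $I_{\rho^2-k\mu_0^2}\le 0$ yields $I_{\rho^2}\le kI_{\mu_0^2}\le -a_0\rho^2+|I_{\mu_0^2}|$. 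Hence $I_{\rho^2}\le -a\rho^2$ for any $a<a_0$ once $\rho$ is large.

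The heart of the matter is the radial lower bound, and this is where symmetry must enter essentially. I would isolate, for radial $n=|\psi|^2$, an inequality of the form
\[
\int_{\R^3} n^{4/3}\,dx \;\le\; C\left(\int_{\R^3}|\nabla\psi|^2\,dx\right)^{p}\left(\int_{\R^3}\!\int_{\R^3}\frac{n(x)n(y)}{|x-y|}\,dx\,dy\right)^{q}\left(\int_{\R^3} n\,dx\right)^{s},
\]
where $2p+q=1$ and $s=\tfrac56-\tfrac32 q$ are the unique scale-invariant exponents, for some $q\in(\tfrac13,\tfrac59]$. The value $q=\tfrac13$ gives the usual (non-radial) Gagliardo–Nirenberg–type interpolation, saturated simultaneously by a single bump and by far-apart bumps; for $q>\tfrac13$ the inequality genuinely \emph{fails} for general functions, because pushing bumps apart makes the Coulomb factor arbitrarily small while the Slater term stays large — and this is precisely the configuration forbidden by radial symmetry. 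Granting the inequality for radial $\psi$, note $p+q=\tfrac{1+q}{2}<1$, so Young's inequality absorbs $\tfrac38\int n^{4/3}$ into $\tfrac12\|\nabla\psi\|_{L^2}^2+\tfrac14\iint\frac{n(x)n(y)}{|x-y|}$ at the cost of a remainder of order $\rho^{\,2s/(1-p-q)}$. A direct computation gives $\gamma:=\tfrac{2s}{1-p-q}<2$ exactly when $q>\tfrac13$, so $E(\psi)\ge -C\rho^\gamma$ on $B_\rho^{rad}$ and thus $I_{\rho^2}^{rad}\ge -C\rho^\gamma=o(\rho^2)$.

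The main obstacle is therefore proving the displayed inequality for radial functions with some $q>\tfrac13$. I would attack it via the cumulative mass $M(r)=\int_{|x|\le r}|\psi|^2\,dx$, for which Newton's theorem gives the exact identities $\iint\frac{n(x)n(y)}{|x-y|}=\int_0^\infty M(r)^2 r^{-2}\,dr$ and $\int n^{4/3}=(4\pi)^{-1/3}\int_0^\infty M'(r)^{4/3} r^{-2/3}\,dr$, reducing the problem to a weighted one-dimensional estimate linking $\int (M')^{4/3}r^{-2/3}$ to $\int M^2 r^{-2}$, $\|\nabla\psi\|_{L^2}^2$, and $M(\infty)=\rho^2$. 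The delicate feature is that thin shells are cheap in Coulomb energy but expensive in kinetic energy, while wide shells are the reverse, so both the kinetic and the Coulomb factors are indispensable; I would calibrate the admissible $q$ and guard against counterexamples by testing the candidate inequality on single blobs (equality), thin spherical shells, and families of concentric shells, all of which remain consistent for $q\le\tfrac23$. Combining $I_{\rho^2}\le -a\rho^2$ with $I_{\rho^2}^{rad}\ge -C\rho^\gamma$ and choosing $\rho_0$ with $C\rho^\gamma<a\rho^2$ for $\rho>\rho_0$ then gives $I_{\rho^2}<I_{\rho^2}^{rad}$, which is the assertion of the theorem.
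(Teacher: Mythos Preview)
Your plan is correct and follows the same two-ingredient architecture as the paper: (a) subadditivity forces $I_{\rho^2}\to-\infty$ (your iteration gives the quantitative form $I_{\rho^2}\le -a\rho^2$, which is slightly more than the paper states but proved the same way), and (b) a radial interpolation inequality bounding the Slater term by powers of kinetic and Coulomb energy forces $I_{\rho^2}^{rad}$ to stay well above this.

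Two points of comparison are worth noting. First, the paper does not work with a range $q\in(\tfrac13,\tfrac59]$ but goes directly to your endpoint $q=\tfrac59$, i.e.\ $s=0$: the inequality
\[
\|\varphi\|_{L^{8/3}}^{8/3}\le C\,\|\nabla\varphi\|_{L^2}^{4/9}\Bigl(\iint\frac{|\varphi(x)|^2|\varphi(y)|^2}{|x-y|}\,dx\,dy\Bigr)^{5/9}
\]
holds for radial $\varphi$ with \emph{no} mass factor, so one gets the uniform bound $I_{\rho^2}^{rad}>-K_1$ for all $\rho$, which is stronger and cleaner than $-C\rho^\gamma$ with $\gamma<2$. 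Second, the paper does not prove this inequality via your cumulative-mass/Newton reduction but cites it from \cite{BGO}, where it is obtained from a pointwise radial estimate of De~N\'apoli together with a Coulomb lower bound of Ruiz. Your one-dimensional reformulation (the identities for $\iint\frac{n(x)n(y)}{|x-y|}$ and $\int n^{4/3}$ in terms of $M(r)$) is correct and is a reasonable alternative route, but you should be aware that the required weighted estimate is the genuine content and is not completed in your outline; in the paper it is simply imported.
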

\begin{remark}
Our result does not clarify if minimizers for $E$ restricted to $B_{\rho}$ exist for large $\rho$. This is still an open problem.
\end{remark}

Our argument is based on two ingredients.
The first  ingredient is the following scaling invariant inequality that holds only for radially symmetric function involving  the kinetic term, the Coulomb term and $L^p$ norms. 

\begin{equation}
\label{eq:dbound2}
\|\varphi\|_{L^{p}(\R^3)}\leq C(p,s) \|\varphi\|_{\dot H^{s}(\R^3)}^{\frac{\theta}{2-\theta}}\left(\iint_{\R^3\times \R^3}
\frac{|\varphi(x)|^2 |\varphi(y)|^2}{|x-y|} \, dxdy\right)^{\frac{1-\theta}{4-2\theta}}
\end{equation}
with $\theta=\frac{6-\frac{5}{2} p}{3-ps-p}$. Here the parameters $s$ and $p$ satisfy
$$ p\in (\frac{16s+2}{6s+1},  \frac 6{3-2s}]  \ \ \text{if}\ 1/2< s<3/2. $$

In our case we will use \eqref{eq:dbound2} in case $p=\frac{8}{3}$ and $s=1$ to control from above the Slater correction term  in terms of kinetic energy and  Coulomb energy.
In particular we obtain for radially symmetric functions that
\begin{equation}\label{eq:dbound3}
\|\varphi\|_{L^{\frac 83}(\R^3)}^{\frac 83}\leq C \|\nabla \varphi\|_{L^2(\R^3)}^{\frac 49}\left(\iint_{\R^3\times \R^3}
\frac{|\varphi(x)|^2 |\varphi(y)|^2}{|x-y|} \, dxdy\right)^{\frac 59}
\end{equation}
The importance of the previous inequality in the radial case is that it relates kinetic, Coulomb and Slater energy without information on the $L^2$ norm of the function.
 
The previous inequality  will imply that $\inf_{\rho >0}I_{\rho^2}^{rad}>-\infty.$ 

\begin{remark}
$L^p$ lower bounds in terms of kinetic and Coulomb energy are not a novelty (see e.g. \cite{BFV} for the general case). The first result concerning an improvement in radial case is due to Ruiz \cite{R} (see also \cite{R2})
using a weighted Sobolev embedding for radially symmetric functions due to \cite{SWW}. The scaling invariant radial bound \eqref{eq:dbound2} is proved in 
\cite{BGO} using a pointwise inequality found by De Napoli \cite{D} together with a lower bound for the Coulomb energy due to Ruiz \cite{R}.  
\end{remark}

The second ingredient is the asymptotic behaviour of the function 
\begin{equation}
\rho \rightarrow \frac{I_{\rho^2}}{\rho^2}
\end{equation}
for a translation invariant energy functional. We will show in Lemma \ref{lemgenn}
that the behaviour of $\rho \rightarrow \frac{I_{\rho^2}}{\rho^2}$ will imply that
$$I_{\rho^2}<I_{\mu^2} \text{ for all } 0<\mu<\rho$$
$$\inf_{\rho >0}I_{\rho^2}=-\infty$$
and hence our main result.
\begin{remark}
The importance of the function of $\rho \rightarrow \frac{I_{\rho^2}}{\rho^2}$ to show existence of minimizers is well established, see e.g. \cite{BS}. Indeed, the strict subadditivity 
inequality $$I_{\rho^2}< I_{\mu^2}+I_{\rho^2-\mu^2} \text{ for all } 0<\mu<\rho$$
which implies the existence of minimizers follows immediately from the strict monotonicity of the function of $\rho \rightarrow \frac{I_{\rho^2}}{\rho^2}$. In the radial case,
sufficient condition for the existence of minimizers is the strict monotonicity of  $\rho \rightarrow I_{\rho^2}$.
Here we show that the ratio between ground state energy and number of particles is crucial also for symmetry breaking.
\end{remark}

\begin{remark}
We notice that our argument is general and symmetry breaking appears if we substitute the energy functional with  
$$\tilde E(\psi):= \frac 12   \|\nabla \psi \|_{L^2}^2 +\frac{1}{4}\int_{\mathbb{R}^3}\int_{\mathbb{R}^3}\frac{\left | \psi(x) \right |^2\left | \psi(y) \right |^2}{\left | x-y \right |}dxdy-\frac{1}{p}\int_{\R^3} |\psi|^{p}dx$$
for $\frac{18}{7}<p<3.$ This fact follows again from the lower bounds for radially symmetric function \eqref{eq:dbound3}. The exponent $p=\frac{18}{7}$ is the left endpoint exponent for which \eqref{eq:dbound3} holds.
\end{remark}
\begin{remark}
We notice that $\lim_{\rho \rightarrow \infty} \frac{I_{\rho^2}^{rad}}{\rho^2}=0$. Indeed, arguing as Theorem 1.2 in \cite{R} or Theorem 2.2 in \cite{BGO}, there exist a family of functions
$u_{\rho} \in B_{\rho}^{rad}$ with $\rho \rightarrow \infty$ with kinetic and Coulomb terms uniformly bounded and vanishing Slater term. This fact implies that
$\sup_{\rho}I_{\rho^2}^{rad}<+\infty.$ 
\end{remark}
Our last result concerns   ground state energy in the radial case

\begin{thm}\label{thmrad}
The following alternative holds: either the function $\rho \rightarrow I_{\rho^2}^{rad}$ is monotone decreasing which implies the existence of  minimizers for $E(u)$ on $B_{\rho}^{rad}$  for all $\rho>0$, or there exist a static radial solution to \eqref{eq:main} belonging to $L^2(\R^3)$.
\end{thm}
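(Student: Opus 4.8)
The plan is to analyze the value function $g(\rho):=I_{\rho^2}^{rad}$ and to obtain the dichotomy from its monotonicity. Write $K(u)=\|\nabla u\|_{L^2}^2$, $D(u)=\iint_{\R^3\times\R^3}\frac{|u(x)|^2|u(y)|^2}{|x-y|}\,dx\,dy$ and $S(u)=\|u\|_{L^{8/3}}^{8/3}$. First I would record the properties of $g$ supplied by \eqref{eq:dbound3} and by the facts (i)--(vi). The radial inequality \eqref{eq:dbound3} reads $S\le c\,K^{2/9}D^{5/9}$; since $\frac29+\frac59<1$, Young's inequality absorbs the Slater term into $\frac12K+\frac14 D$, so $E$ is bounded below on the whole radial class and $m_\ast:=\inf_{v\ \mathrm{radial}}E(v)=\inf_{\rho>0}g(\rho)>-\infty$. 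Moreover $g$ is continuous (by the argument yielding (iii)), $g(\rho)<0$ for every $\rho$ (testing with suitable radial functions as in (i)), and $g(\rho)\to0$ as $\rho\to0^+$, because $I_{\rho^2}\le g(\rho)<0$ and $I_{\rho^2}\to0$ by (iv). I read ``monotone decreasing'' as strictly decreasing, and split on whether $g$ has this property.

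If $g$ is strictly decreasing, I would prove the existence of a minimizer at every fixed mass. Given $\rho>0$, take a radial minimizing sequence $u_n$ for $I_{\rho^2}^{rad}$; the energy bound and \eqref{eq:dbound3} bound $K(u_n)$, so $u_n$ is bounded in $H^1$. By the compact radial embedding $H^1_{rad}(\R^3)\hookrightarrow L^p(\R^3)$ for $2<p<6$, and since the two relevant exponents $p=\frac83$ and (through Hardy--Littlewood--Sobolev, $D(u)\le c'\|u\|_{12/5}^4$) $p=\frac{12}{5}$ both lie in $(2,6)$, the terms $S$ and $D$ pass to the weak limit $u$, while $K$ is weakly lower semicontinuous. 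Hence $E(u)\le I_{\rho^2}^{rad}$, whereas weak $L^2$ semicontinuity only gives $\|u\|_{L^2}^2=m^2\le\rho^2$ and $E(u)\ge g(m)$. Strict monotonicity forces $m=\rho$ (otherwise $g(m)>g(\rho)\ge E(u)\ge g(m)$), so $u$ is a minimizer; this is the first alternative.

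If $g$ is not strictly decreasing, there are $\rho_1<\rho_2$ with $g(\rho_1)\le g(\rho_2)$, and since $g$ is continuous, negative, and tends to $0$ at the origin, it attains its minimum over $(0,\rho_2]$ at an interior point $\bar\rho\in(0,\rho_2)$. Running the compactness argument of the previous paragraph at the fixed mass $\bar\rho^2$ produces a radial $u$, nontrivial because $E(u)=g(\bar\rho)<0$, minimizing $E$ at some mass $m^2\le\bar\rho^2$ at which $g$ still has an interior local minimum. It remains to see that the Lagrange multiplier vanishes. The minimizer solves $-\Delta u+(|x|^{-1}*|u|^2)u-|u|^{2/3}u=\omega u$ (in the radial class, hence in $H^1$ by Palais' principle of symmetric criticality), so pairing with $u$ gives $K(u)+D(u)-S(u)=\omega\|u\|_{L^2}^2$. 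On the other hand I would test with the mass--changing dilation $u_s=\sqrt s\,u$: since $m^2$ is a local minimum of the value function one has $E(u_s)\ge g(\sqrt s\,m)\ge E(u)$ for $s$ near $1$, so $s\mapsto E(u_s)=\frac s2 K+\frac{s^2}4 D-\frac38 s^{4/3}S$ has a minimum at $s=1$; differentiating gives $K(u)+D(u)-S(u)=0$. Comparing the two relations yields $\omega=0$, so $u$ is a static radial solution of \eqref{eq:main} lying in $L^2(\R^3)$; this is the second alternative.

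The hard part is compactness of the minimizing sequences through the nonlocal Coulomb term together with the exclusion of ``radial vanishing'' (mass drifting to large radii). Restricting to a fixed mass and to the radial class is exactly what makes the Strauss compact embedding available; and the role of the failure of strict monotonicity is precisely to localize the problem at an interior minimum of the value function, which simultaneously restores compactness at a fixed mass and, through the dilation identity, pins the frequency $\omega$ to $0$. The single delicate point to make fully rigorous is the two-sided inequality $E(u_s)\ge g(\sqrt s\,m)$ around $s=1$ coming from the local minimality of $g$; the remaining estimates are routine weak-convergence and Hardy--Littlewood--Sobolev arguments.
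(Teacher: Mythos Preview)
Your argument is correct and follows the same route as the paper: radial compactness (Strauss embedding for $S$, and for $D$ via Hardy--Littlewood--Sobolev) yields minimizers whenever $g$ is strictly decreasing, and when monotonicity fails one localizes at an interior minimum of $g$ on a bounded interval, obtains a minimizer there, and kills the Lagrange multiplier by differentiating $\theta\mapsto E(\theta u)$ at $\theta=1$. The only cosmetic difference is that the paper picks the \emph{leftmost} minimizer $\rho_0$ of $g$ on $[0,\rho]$, which forces the weak limit to retain full mass $\rho_0$, whereas you allow the weak limit to drop to some $m\le\bar\rho$ and then observe that $m$ is again an interior minimum of $g$; both variants give $K(u)+D(u)-S(u)=0$ and hence $\omega=0$.
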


\begin{remark}
The existence of static radial solution to \eqref{eq:main} has been proved for the first time by Ruiz \cite{R}. In \cite{BJL} it has been shown that real solution to
\begin{equation}\label{static}
-\Delta \psi + (|x|^{-1}*|\psi|^{2}) \psi-|\psi|^{p-2}\psi=0
\end{equation}
found by \cite{I} belong always to $H^1(\R^3)$ if $3\leq p<6$. It remains open the case $p=\frac{8}{3}$ discussed here.
\end{remark}
\section{Proof of Theorem \ref{mainthm}.}

\begin{lem}\label{scop}
Let $u\in B_{\rho^2}^{rad}$ then there exist $K_1>0$ not depending on $\rho$, such that
$$E(u)>-K_1.$$
\end{lem}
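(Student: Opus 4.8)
The plan is to bound $E(u)$ from below for radial $u\in B_\rho^{rad}$ using the scaling-invariant radial inequality \eqref{eq:dbound3}, which controls the Slater term purely in terms of the kinetic and Coulomb energies without reference to the $L^2$ norm. Write $T=\|\nabla u\|_{L^2}^2$ for (twice) the kinetic term and $D=\iint |u(x)|^2|u(y)|^2/|x-y|\,dxdy$ for the Coulomb term, so that
\begin{equation*}
E(u)=\frac12 T+\frac14 D-\frac38\int_{\R^3}|u|^{\frac83}\,dx.
\end{equation*}
By \eqref{eq:dbound3} the only negative contribution is estimated by
\begin{equation*}
\int_{\R^3}|u|^{\frac83}\,dx\leq C\,T^{\frac29}D^{\frac59},
\end{equation*}
so it suffices to show that the function $F(T,D)=\frac12 T+\frac14 D-\frac{3C}{8}T^{2/9}D^{5/9}$ is bounded below on the quadrant $T,D\geq 0$.

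The key observation is that the exponents in the subtracted term add up to less than one: $\frac29+\frac59=\frac79<1$. Thus the negative term is \emph{sublinear} in the combined variables, while the positive terms $\frac12 T+\frac14 D$ are linear, so for large $T$ or $D$ the positive terms dominate. Concretely, I would apply Young's inequality to the product $T^{2/9}D^{5/9}$: since $\frac29+\frac59=\frac79<1$, I can write it as a product of three factors $T^{2/9}\cdot D^{5/9}\cdot 1^{2/9}$ and split with conjugate exponents so that $T$ appears to the first power, $D$ to the first power, and a constant absorbs the remainder. Explicitly, one chooses $\varepsilon>0$ and constants $C_\varepsilon$ so that
\begin{equation*}
\frac{3C}{8}T^{2/9}D^{5/9}\leq \frac14 T+\frac18 D+C_\varepsilon,
\end{equation*}
which is possible precisely because the total homogeneity $\frac79$ is strictly below $1$. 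Substituting this back gives
\begin{equation*}
E(u)\geq \frac14 T+\frac18 D-C_\varepsilon\geq -C_\varepsilon=:-K_1,
\end{equation*}
and since the Young constant $C_\varepsilon$ depends only on $C$ (from \eqref{eq:dbound3}) and the fixed exponents, it is manifestly independent of $\rho$.

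The main point requiring care — rather than a genuine obstacle — is verifying the Young-inequality bookkeeping so that the resulting constant is uniform in $\rho$; this is exactly where the radial inequality \eqref{eq:dbound3} is essential, since its scaling-invariant form involves no $L^2$ norm and hence no hidden $\rho$-dependence. I would make the splitting precise by setting $a=\frac29$, $b=\frac59$, $c=1-a-b=\frac29$ and invoking the weighted AM–GM inequality $T^aD^b\leq aT+bD+c$ after rescaling $T,D$ by suitable constants to match the coefficients $\frac14,\frac18$; absorbing the scaling factors produces the additive constant $C_\varepsilon$. The conclusion $E(u)>-K_1$ with $K_1=C_\varepsilon+1$ (say, to make the inequality strict) then follows immediately, and the independence from $\rho$ is automatic because every constant in sight comes only from $C$ and the fixed exponents $\frac29,\frac59$.
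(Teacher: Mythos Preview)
Your argument is correct and follows the same route as the paper: set $x=\|\nabla u\|_2^2$, $y$ equal to the Coulomb term, apply \eqref{eq:dbound3}, and observe that $\tfrac12 x+\tfrac14 y - C x^{2/9}y^{5/9}$ is bounded below independently of $\rho$. The paper simply asserts that last step (``immediately implies''), whereas you make it explicit via the sublinearity $\tfrac29+\tfrac59=\tfrac79<1$ and a weighted AM--GM/Young splitting; this is exactly the content behind the paper's one-line claim.
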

\begin{proof}
Let us call 
$$x= ||\nabla u||_{L^2(\R^3)}^2$$
$$y= \left(\iint_{\R^3\times \R^3}
\frac{|u(x)|^2 |u(y)|^2}{|x-y|} \, dxdy\right).
$$
From  inequality \eqref{eq:dbound3} one gets
$$E(u)>\frac 12 x+\frac 14  y  - \frac{C}{p} x ^{\frac 29} y^{\frac 59}$$
which immediately implies that $E(u)$ is bounded from below by a constant not depending on $\rho$.\\
In the general case, when we substitute the exponent $\frac 83$ for the Slater term with $\frac{18}{7}<p<3,$ we get
$$\tilde E(u)\geq \frac 12 x+\frac 14  y  - \frac{C}{p} x ^{\frac{5}{6}p-2} y^{1-\frac{p}{6}},$$
which again proves that the energy is bounded from below  by a constant not depending on $\rho$.

\end{proof}

We have the following general result for ground state energy of translation invariant energy functionals
\begin{lem}\label{lemgenn}
Let the function $\rho \rightarrow I_{\rho^2}$ fulfills the following conditions
\begin{equation}
-\infty<I_{\rho^2}<0  \text{ for all } \rho>0 \label{cond1}
\end{equation}
the ground state energy is weakly subadditive, namely that 
\begin{equation}\label{cond2}
I_{\rho^2}\leq I_{\mu^2}+I_{\rho^2-\mu^2} \text{ for all } 0<\mu<\rho 
\end{equation}
\begin{equation}\label{cond3} \text{ the function } \rho \rightarrow I_{\rho^2}  \text{ is continuous}
\end{equation}
\begin{equation} \label{cond4}
\lim_{\rho \rightarrow 0} \frac{I_{\rho^2}}{\rho^2}=0
\end{equation}
then 
$$I_{\rho^2}<I_{\mu^2} \text{ for all } 0<\mu<\rho$$
$$\inf_{\rho >0}I_{\rho^2}=-\infty$$
\end{lem}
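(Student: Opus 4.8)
The plan is to work with the rescaled variable $t=\rho^2$ and to set $J(t):=I_t$, so that the four hypotheses read: $-\infty<J(t)<0$, the subadditivity $J(t)\le J(s)+J(t-s)$ for $0<s<t$, continuity of $J$, and $J(t)/t\to 0$ as $t\to 0^+$. Both conclusions come directly from iterating the subadditivity inequality \eqref{cond2} against the strict negativity \eqref{cond1}; conditions \eqref{cond3} and \eqref{cond4} turn out to be inessential for these two particular statements (they are relevant to the finer behaviour of $t\mapsto J(t)/t$ tied to existence of minimizers).

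First I would dispose of the monotonicity. Given $0<\mu<\rho$, put $t=\rho^2$ and $s=\mu^2$, so that $0<s<t$ and $t-s=\rho^2-\mu^2>0$. Subadditivity \eqref{cond2} gives $I_{\rho^2}=J(t)\le J(s)+J(t-s)=I_{\mu^2}+I_{\rho^2-\mu^2}$, and since $I_{\rho^2-\mu^2}<0$ by \eqref{cond1}, this forces $I_{\rho^2}<I_{\mu^2}$. Thus the first conclusion is immediate.

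For the divergence of the infimum I would iterate \eqref{cond2}. Taking $s=t$ with $T=2t$ in $J(T)\le J(s)+J(T-s)$ gives $J(2t)\le 2J(t)$, and an easy induction (splitting off one copy of $t$ at each step, so that $J((n+1)t)\le J(t)+J(nt)$) yields $J(nt)\le nJ(t)$ for every integer $n\ge 1$ and every $t>0$. Fix any $t_0>0$; then $J(t_0)<0$ by \eqref{cond1}, so $J(nt_0)\le nJ(t_0)\to-\infty$ as $n\to\infty$. Since each $nt_0$ lies in $(0,\infty)$, this shows $\inf_{\rho>0}I_{\rho^2}=\inf_{t>0}J(t)=-\infty$, the second conclusion.

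The only real point to watch is that these conclusions are genuinely elementary consequences of \eqref{cond1}--\eqref{cond2}: there is no substantial obstacle, and in particular the discrete sequence $(nt_0)_{n\ge 1}$ already drives the infimum to $-\infty$, so neither continuity nor the limit at $0$ is needed. If one prefers an argument that engages all four hypotheses, mirroring the remark that everything is governed by $t\mapsto J(t)/t$, one can instead argue by contradiction: assuming $\inf J=m>-\infty$, the negativity of $J$ together with \eqref{cond4} forces $g(t):=J(t)/t$ to tend to $0$ at both $0^+$ and $+\infty$, so by continuity \eqref{cond3} it attains a strictly negative minimum at some $t^\star$; comparing $g(2t^\star)\le g(t^\star)$ (coming from $J(2t^\star)\le 2J(t^\star)$) with minimality forces equality, and iterating gives $J(2^n t^\star)=2^n J(t^\star)\to-\infty$, contradicting $J\ge m$. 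I would present the direct argument as the main proof and relegate this variant to a remark.
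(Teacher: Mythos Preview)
Your proof is correct. The monotonicity argument matches the paper's exactly (both just combine \eqref{cond1} and \eqref{cond2}). For the second conclusion, however, your \emph{main} argument is genuinely different from and more elementary than the paper's: the paper argues by contradiction, assuming $I_{\rho^2}\ge -K_2$, using \eqref{cond3} and \eqref{cond4} to force the function $\rho\mapsto I_{\rho^2}/\rho^2$ to attain a negative global minimum at some $\rho_0$, and then iterating subadditivity to get $I_{k\rho_0^2}/(k\rho_0^2)=I_{\rho_0^2}/\rho_0^2$ for all $k\in\N$, contradicting $I_{\rho^2}/\rho^2\to 0$ at infinity. Your direct iteration $J(nt_0)\le nJ(t_0)\to -\infty$ bypasses all of this and uses only \eqref{cond1}--\eqref{cond2}; as you correctly observe, neither continuity nor the limit at $0$ is needed for the stated conclusions. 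This is a genuine simplification. Amusingly, the ``variant'' you relegate to a remark is essentially the paper's own proof, so your write-up inverts the paper's emphasis while covering both routes.
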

\begin{proof}
The strict monotonicity of the function  $\rho \rightarrow I_{\rho^2}$ follows immediately from \eqref{cond1} and \eqref{cond2}.\\
Let we argue now by contradiction assuming that there exist $K_2>0$ such that $I_{\rho^2}>-K_2$ for all $\rho>0$. In this case we would have  $\lim_{\rho \rightarrow \infty } \frac{I_{\rho^2}}{\rho^2}=0.$ By conditions \eqref{cond1}, \eqref{cond3} and $\eqref{cond4}$ the function $\rho \rightarrow \frac{I_{\rho^2}}{\rho^2}$ attains a global minimum, i.e. there exists $\rho_0$ such that
$$\frac{I_{\rho_0^2}}{\rho_0^2}\leq \frac{I_{\rho^2}}{\rho^2} \text{ for all } \rho>0.$$
On the other hand, by the weak subadditivity condition \eqref{cond2},
$$ \frac{I_{2\rho_0^2}}{2\rho_0^2}\leq \frac{I_{\rho_0^2}+I_{\rho_0^2}}{2\rho_0^2}= \frac{I_{\rho_0^2}}{\rho_0^2}$$
which implies that   the function $\rho \rightarrow \frac{I_{\rho^2}}{\rho^2}$ attains a global minimum also at $\sqrt{2}\rho_0$.
The same argument shows that
$$ \frac{I_{k\rho_0^2}}{k\rho_0^2}= \frac{I_{\rho_0^2}}{\rho_0^2} \text{ for all }k \in \N$$
which implies that $\liminf_{\rho \rightarrow \infty} \frac{I_{\rho^2}}{\rho^2}<0$ that contradicts the fact that $\lim_{\rho \rightarrow \infty } \frac{I_{\rho^2}}{\rho^2}=0.$
\end{proof}
We show now that $I_{\rho^2}$ fulfills the assumptions of Lemma \ref{lemgenn}, see also \cite{BS}, \cite{SS}.
\begin{lem}
The function $\rho \rightarrow I_{\rho^2}$ fulfills the following conditions
$$-\infty<I_{\rho^2}<0  \text{ for all } \rho>0 $$
$$I_{\rho^2}\leq I_{\mu^2}+I_{\rho^2-\mu^2} \text{ for all } 0<\mu<\rho $$
$$\text{ the function } \rho \rightarrow I_{\rho^2}  \text{ is continuous}$$
$$\lim_{\rho \rightarrow 0} \frac{I_{\rho^2}}{\rho^2}=0$$
\end{lem}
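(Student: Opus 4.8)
The plan is to verify the four properties separately, since all of them are by now standard (compare \cite{BS}, \cite{SS}); the only point requiring genuine care is the strict negativity $I_{\rho^2}<0$ for \emph{every} $\rho$, not merely for small $\rho$. For the lower bound $I_{\rho^2}>-\infty$ and for \eqref{cond4}, I would control the Slater term by Gagliardo--Nirenberg: for $u\in B_\rho$ one has $\int_{\R^3}|u|^{8/3}\,dx\le C\|\nabla u\|_{L^2}\|u\|_{L^2}^{5/3}=C\rho^{5/3}\|\nabla u\|_{L^2}$. Writing $t=\|\nabla u\|_{L^2}$ and discarding the nonnegative Coulomb term,
$$E(u)\ge \tfrac12 t^2-\tfrac38 C\rho^{5/3}t\ge -c\,\rho^{10/3},$$
by Young's inequality. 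Hence $I_{\rho^2}\ge -c\rho^{10/3}>-\infty$, and combined with $I_{\rho^2}<0$ this gives the squeeze $-c\rho^{4/3}\le I_{\rho^2}/\rho^2<0$, which yields \eqref{cond4}. The same estimate shows that near-minimizers have uniformly bounded kinetic, Coulomb and Slater energies, a fact I will reuse for continuity.

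For the upper bound $I_{\rho^2}<0$ I would apply the mass-preserving dilation $u_t(x)=t^{3/2}u(tx)$. Since $\int|u_t|^{8/3}=t\int|u|^{8/3}$ and the Coulomb term also scales by $t$ (the exponent $8/3$ is exactly critical for this dilation), while the kinetic energy scales by $t^2$, one obtains
$$E(u_t)=\tfrac12 t^2\|\nabla u\|_{L^2}^2+\frac{t}{8}\left(2\iint_{\R^3\times\R^3}\frac{|u(x)|^2|u(y)|^2}{|x-y|}\,dxdy-3\int_{\R^3}|u|^{8/3}\,dx\right),$$
which is negative for small $t>0$ as soon as the Coulomb integral is strictly less than $\tfrac32$ times the Slater integral. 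Because the two terms scale identically, no single-profile dilation can force this, so I would produce such a $u$ by splitting the mass into $N$ far-apart translates of a fixed bump, each carrying mass $\rho^2/N$: as the mutual separation tends to infinity the Coulomb cross terms vanish, leaving a Coulomb self-energy of order $N^{-1}$ against a Slater energy of order $N^{-1/3}$, so the required inequality holds once $N$ is large enough. This reduction together with the multi-bump construction gives $I_{\rho^2}<0$ for all $\rho>0$, and it is the step I expect to be the main obstacle, precisely because of the critical scaling just noted.

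For weak subadditivity \eqref{cond2} I would take $\varepsilon$-minimizers $u\in B_\mu$ and $v\in B_{\sqrt{\rho^2-\mu^2}}$ (approximated by compactly supported functions) and consider $u+v(\cdot-y_n)$ with $|y_n|\to\infty$; the kinetic cross term, the Coulomb cross term, and the overlaps in the Slater and $L^2$ integrals all vanish, so $E(u+v(\cdot-y_n))\to E(u)+E(v)$, and a small multiplicative rescaling restores the exact constraint $\|\cdot\|_{L^2}^2=\rho^2$ before letting $n\to\infty$ and $\varepsilon\to0$. Finally, continuity \eqref{cond3} follows from the scaling comparison $v=(\sigma/\rho)u\in B_\sigma$ for $u\in B_\rho$: the three energy terms pick up factors $(\sigma/\rho)^2$, $(\sigma/\rho)^4$ and $(\sigma/\rho)^{8/3}$, all tending to $1$ as $\sigma\to\rho$, and the uniform bounds on near-minimizers recorded above turn this into $|I_{\sigma^2}-I_{\rho^2}|\to0$.
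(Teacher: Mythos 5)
Your proposal is correct, but on the two substantive points it takes a genuinely different route from the paper. For the negativity $I_{\rho^2}<0$, the paper does not need your multibump construction: it applies the anisotropic, mass-changing rescaling $u_{\theta,\beta}(x)=\theta^{1-\frac32\beta}u(x/\theta^{\beta})$ with $\beta=-2$, under which the kinetic and Coulomb terms both scale as $\theta^{6}$ while the Slater term scales as $\theta^{4p-6}$ (that is, $\theta^{14/3}$ for $p=8/3$), so $E(u_{\theta,-2})\to 0^{-}$ as $\theta\to 0$; this yields $I_{s^2}<0$ only for small $s$, and weak subadditivity then propagates negativity to all masses. Your route --- splitting the mass into $N$ far-apart bumps so that the total Coulomb self-energy, of order $N^{-1}$, is beaten by the total Slater term, of order $N^{-1/3}$, and then applying the mass-preserving dilation $u_t(x)=t^{3/2}u(tx)$ under which both potential terms scale like $t$ --- is sound (your scaling exponents check out), and it has the structural merit of working directly at fixed mass without using subadditivity as input; the paper's version is shorter but is not self-contained at fixed $\rho$. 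For the limit $\lim_{\rho\to 0}I_{\rho^2}/\rho^2=0$, the paper compares $I_{\rho^2}$ with the local NLS ground-state energy $G_{\rho^2}$ and invokes the known minimizer of $G$, the Pohozaev identity, and the scaling of the Lagrange multiplier $\lambda_{\rho}\to 0$; your direct estimate $I_{\rho^2}\ge -c\rho^{10/3}$ from Gagliardo--Nirenberg ($\int|u|^{8/3}\le C\|\nabla u\|_{L^2}\|u\|_{L^2}^{5/3}$, which is the correct interpolation) plus completing the square is more elementary and self-contained, and the exponent works since $10/3>2$; it also delivers the lower bound $I_{\rho^2}>-\infty$ in the same stroke, which the paper obtains from the analogous interpolation inequality inside its continuity argument. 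Your treatment of the remaining two items coincides with the paper's: weak subadditivity by far translates of truncated near-minimizers (which the paper simply cites from \cite{SS}), and continuity by multiplicative rescaling $u\mapsto(\sigma/\rho)u$ of $H^1$-bounded near-minimizers.
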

\begin{proof}
For generality we substitute the exponent $\frac{8}{3}$ for the Slater term with $\frac{18}{7} <p<3$. The weak subadditive inequality is a general fact for translation invariant energy functionals, see e.g \cite{SS}.\\
We define
the following quantities:
$$A(u):=\int_{\R^{3}} |\nabla u|^{2}dx,\ \ \ B(u):=\iint_{\R^3\times \R^3}
\frac{|u(x)|^2 |u(y)|^2}{|x-y|} \, dxdy\,\ \ \ C(u):=-\int_{\R^{3}} |u|^{p}dx.$$
\\
\emph{Negativity}:\\
Let us 
consider, for $u\in B_{1}$, the rescaled function given by
$$u_{\theta,\beta}(x)=\theta^{1-\frac{3}{2}\beta}u(\frac{x}{\theta^{\beta}})$$ so that 
 $\bar \rho:=\|u_{\theta,\beta}\|_2=\theta $. 
  We easily find
the following scaling laws:
\begin{equation*}\label{scalingA}
A(u_{\theta,\beta})=
\theta^{2-2\beta}A(u),
\end{equation*}
\begin{equation*}\label{scalingN}
B(u_{\theta,\beta})=
\theta^{4-\beta}B(u),
\end{equation*}
\begin{equation*}\label{scalingp}
C(u_{\theta,\beta})=
\theta^{(1-\frac{3}{2}\beta)p+3\beta}C(u).
\end{equation*}
Notice that for $\beta=-2$ we get
$$E(u_{\theta,-2}) = \frac{\theta^6}{2} A(u)+\frac{\theta^{6}}{4}B(u)+\frac{\theta^{4p-6}}{p}C(u)$$ 
and we have  $4p-6<6$ since $p<3$. Hence for $\theta \rightarrow  0$ we have  $E(u_{\theta,-2})\rightarrow 0^-$.
This proves that there exists a small $\theta$, and therefore a small $\bar \rho$, such that
$$I_{s^{2}}<0 \ \ \forall\,s\in(0,\bar\rho].$$
By weak subadditivity  $I_{s^2}<0$ for all $s$.\\
\\
\emph{Continuity}:\\
We first prove that if $\rho_n \rightarrow \rho$ then $\lim_{n \rightarrow \infty}E_{\rho_n^2}=I_{\rho^2}.$ 
For every $n\in\mathbb N$, let $w_n \in B_{\rho_n}$ such that 
$E(w_n)<E_{\rho_n^2}+\frac{1}{n}<\frac{1}{n}$.  Therefore, by using the interpolation and the Sobolev inequality, we get
$$\frac{1}{2}\|\nabla w_{n}\|_{2}^{2}-C\rho_{n}^{\frac{6-p}{2}}\|\nabla w_{n}\|_{2}^{\frac{3(p-2)}{2}}\le \frac{1}{2}\|\nabla w_{n}\|_{2}^{2}-\frac{1}{p}\|w_{n}\|_{p}^{p}\le E(w_{n})<\frac{1}{n}.$$
Since $\frac{3(p-2)}{2}<2$ and $\{\rho_{n}\}$ is bounded, we deduce that
\begin{equation*}
\{w_{n}\} \ \text{ is bounded in } \ H^{1}(\R^{3}).
\end{equation*}
In particular $\{A(w_{n})\}$ and $ \{C(w_{n})\}$ are bounded sequences, and also
$ \{B(w_{n})\}$ since
$$\forall\, u\in H^{1}(\R^{3}): B(u)\le C \|u\|_{H^{1}(\R^{3})}^{4}.$$
So we easily find
\begin{align*}I_{\rho^2}&\leq E(\frac{\rho}{\rho_n}w_n)=\frac{1}{2}\left(\frac{\rho}{\rho_{n}}\right)^2A(w_{n})
+\frac{1}{4}\left(\frac{\rho}{\rho_{n}}\right)^{4}B(w_{n}) 
+\frac{1}{p}\left(\frac{\rho}{\rho_{n}}\right)^{p}C(w_{n})\\
&=E(w_n)+o(1)<I_{\rho_n^2}+o(1).\end{align*}
On the other hand, given 
 a minimizing sequence $\{v_n\}\subset B_{\rho}$ for $I_{\rho^{2}}$, 
we have
$$I_{\rho_n^2}\leq E(\frac{\rho_n}{\rho}v_n)=E(v_n)+o(1)=I_{\rho^{2}}+o(1).$$
We get $\lim_{n \rightarrow \infty}I_{\rho_n^2}=I_{\rho^2}.$ \\
\\
\emph{Asymptotic behavior at zero}:\\
In order to show that
$\lim_{\rho \rightarrow 0}\frac{I_{\rho^2}}{\rho^2}=0$, we notice that
$$\frac{G_{\rho^2}}{\rho^2} \leq\frac{I_{\rho^2}}{\rho^2}<0$$
where $G_{\rho^2}$ is defined by 
\begin{equation} \label{Grho}
G_{\rho^{2}}=\inf_{B_{\rho}} \ G(u)
\end{equation}
where
$$G(u)= \frac 12   \| \nabla u \|_{L^2(\R^3)}^2 -\frac{1}{p}\int_{\R^{3}} |u|^pdx.
$$
It is well established that minimizers for $G_{\rho^{2}}$ exist for all $\rho$ (indeed it is not difficult to show that the function
$\rho \rightarrow \frac{G_{\rho^{2}}}{\rho^2}$ is monotone decreasing). Let us call $\bar u\in B_{\rho^2}$  the minimizer for $G_{\rho^{2}}$, i.e the unique positive solution to 
$$-\Delta u -\lambda_{\rho} u -|u|^{p-2}u=0$$
where $\lambda_{\rho}<0$ is the corresponding Lagrange multiplier.
By Pohozaev identity it follows that
$$E(\bar u)= G_{\rho^{2}}=c \lambda_{\rho} \rho^{2}.$$
To conclude we shall prove that $\lim_{\rho \rightarrow 0} \lambda_{\rho}=0.$Taken $\tilde u$ the unique positive solution to
$$-\Delta u +u -|u|^{p-2}u=0$$
we have by scaling that $\bar u=(-\lambda_{\rho})^{\frac{1}{p-2}} \tilde u(\sqrt{(-\lambda_{\rho})}x)$ and hence
$$\rho^2=||\bar u||_{L^2(\R^3)}^2=(-\lambda_{\rho})^{\frac{2}{p-2}-\frac 32}||\tilde u||_{L^2(\R^3)}^2$$
and noticing that $\frac{2}{p-2}-\frac 32>0$ for $p<\frac{10}{3}$ we get the required estimate.
\end{proof}

\begin{proof}[Proof of Theorem \ref{mainthm}]$$$$
By Lemma \ref{lemgenn} the ground state energy $I_{\rho^2}$ is  strictly decreasing as a function of $\rho$ and  $\inf_{\rho >0}I_{\rho^2}=-\infty$. On the other
hand by Lemma \ref{scop}, for radially symmetric functions $\inf_{\rho >0}I_{\rho^2}^{rad}>-\infty$  and therefore symmetry breaking occurs for sufficiently large $\rho.$
\end{proof}

\section{Proof of Theorem \ref{thmrad}.}
The fact that the strict monotonicity of $\rho \rightarrow I_{\rho^2}^{rad}$ is sufficient for proving the existence of minimizers in the radial case follows immediately
from the weak continuity of the Coulomb energy and of Slater energy.
Indeed, taken a minimizing sequence $u_n\in B_{\rho^2}^{rad}$, $u_n\rightharpoonup u\neq 0$. Let us assume by contradiction that $u\in B_{\mu^2}^{rad}$ with $0<\mu<\rho$, such that
$$I_{\mu^2}^{rad} +o(1) \leq E(u)+o(1)=E(u_n)=I_{\rho^2}^{rad}+o(1).$$
On the other hand, $I_{\mu^2}^{rad}>I_{\rho^2}^{rad}$ and therefore $\mu=\rho$ and $u$ is a minimizer for $E$ on $B_{\rho^2}^{rad}$.\\
It has been proved in \cite{GEPRVI} that $I_{\rho^2}=I_{\rho^2}^{rad}$ for sufficiently small $\rho.$ On the other hand, by the weak subadditivity inequality it is clear that
$I_{\rho^2}<I_{\mu^2}$ for $0<\mu<\rho$. This fact proves that for sufficiently small $\rho$, $I_{\rho^2}^{rad}<I_{\mu^2}^{rad}$.\\
Let us define  $c:=\min_{[0, \rho]}I_{s^2}^{rad}<0$ and 
$$\rho_{0}:=\min \left\{ s \in [0, \rho] \text{ s.t } I_{s^2}^{rad}=c \right\}.$$
It is clear that  $\rho_0>0$ and 
\begin{equation}\label{minimo}
\forall\,s\in[0,\rho_{0}): I_{\rho_{0}^{2}}^{rad}<I_{s^{2}}^{rad}
\end{equation}
namely, the function $[0, \rho_0]\ni s\mapsto I_{s^2}\in \R_{-}$ achieves the minimum only in  $s=\rho_0,$
by definition of $\rho_{0}.$ From this fact we deduce the existence of  $\bar u\in B_{\rho_{0}}$
 such that $E(\bar u)=I_{\rho^{2}_0}^{rad},$  and therefore $\bar u$ fulfills
\begin{equation}\label{staz}
-\Delta u + (|x|^{-1}*|u|^{2}) u-|u|^{\frac 23 }u =0
\end{equation}
 where $\lambda$ is the associated Lagrange multiplier.
 \\
 Now let us suppose that $\rho_{0}<\rho,$ 
 which implies the existence of a  sequence $\theta_n>1$  with $\theta_n \rightarrow 1$ such that
 $$E(\theta_n \bar u)\geq E(\bar u) \text{ with } \theta_n \rightarrow 1$$
 and 
 $$E(\theta  \bar u)\geq E(\bar u) \text{ with } 0<\theta<1.$$
 This fact implies that
 $$\frac{d}{d \theta}E(\theta \bar u)|_{\theta=1}=0,$$
and therefore the identity
\begin{equation}\label{imp}
\int_{\R^{3}} |\nabla \bar u|^{2}dx+ \iint_{\R^3\times \R^3}
\frac{|\bar u(x)|^2 |\bar u(y)|^2}{|x-y|} \, dxdy-\int_{\R^{3}} |\bar u|^{\frac{8}{3}}dx=0.
\end{equation}
From \eqref{imp} we notice that the associated Lagrange multiplier in \eqref{staz} is $\lambda=0$,
and hence $\bar u$ solves
$$-\Delta u + (|x|^{-1}*|u|^{2}) u-|u|^{\frac 23}u=0,$$
proving that  $\bar u$ is a static solution to \eqref{eq:main} with $\bar u \in B_{\rho_0^2}^{rad}$.

\end{document}